\newtheorem{theorem}{Theorem}
\newtheorem{lemma}[theorem]{Lemma}
\newtheorem{definition}[theorem]{Definition}
\newcommand{\myleft}{\mathopen{}\mathclose\bgroup\left}
\newcommand{\myright}{\aftergroup\egroup\right}
\DeclareMathOperator\tr{Tr}
\newcommand{\ket}[1]{\left.\left|{#1}\right.\right\rangle}
\newcommand{\bra}[1]{\left.\left\langle{#1}\right.\right|}
\newcommand{\ii}{\mathrm{i}}
\newcommand{\Spec}{\mathrm{Spec}}
\definecolor{jonas}{rgb}{.1,.5,.5}
\newcommand{\fu}{Dahlem Center for Complex Quantum Systems, Freie Universit\"{a}t Berlin, Germany}
\begin{document}
	
	\title{Equivalence of contextuality and Wigner function negativity\\ in continuous-variable quantum optics}

	\author{Jonas Haferkamp}
	\affiliation{\fu}
	\author{Juani Bermejo-Vega}
    \affiliation{Departamento  de Electromagnetismo y Física de la Materia, Universidad de Granada, 18010 Granada, Spain.}
	\affiliation{Institute Carlos I for Theoretical and Computational Physics, Universidad de Granada, 18010 Granada, Spain.}

	\begin{abstract}
		One of the central foundational questions of physics is to identify what makes a system quantum as opposed to classical.
		One seminal notion of classicality of a quantum system is the existence of a non-contextual hidden variable model as introduced in the early work by Bell, Kochen and Specker.
       In quantum optics, the non-negativity of the Wigner function is a ubiquitous notion of classicality.
		In this work we establish an equivalence between these two concepts.
		In particular, we show that any non-contextual hidden variable model for Gaussian quantum optics has an alternative non-negative Wigner function description. Conversely,  it was known that the Wigner representation provides a non-negative non-contextual  description of Gaussian quantum optics. It follows that contextuality and Wigner negativity are equivalent notions of non-classicality and equivalent resources for this quantum subtheory. In particular, both contextuality and Wigner negativity are necessary for a computational speed-up of quantum Gaussian optics.
		At the technical level, our  result holds true for any subfamily of Gaussian measurements that include  homodyne measurements, i.e., measurements of standard quadrature observables.
	
	\end{abstract}

	\maketitle

	Identifying notions of non-classicality that distinguish genuine quantum systems from classical ones is a central aim in the foundations of quantum physics. 
	Moreover, it is a vital question in quantum computing to delineate which quantum features give rise to computational advantages.~\cite{cleve1998quantum,vidal2003efficient,markov2008simulating,gross2009most,veitch_simulation_2012,galvao2005discrete,mari_wigner_2012,howard_contextuality_2014,raussendorf2013contextuality,stahlke2014quantum,arute2019quantum}. Two central notions of non-classicality are the negativity of the Wigner function~\cite{wigner1997quantum} and quantum contextuality~\cite{kochen1975problem,spekkens_equivalent_2008}. 
	
    The Wigner function is a quasi-probability representation, which is ubiquitous in the field of quantum optics~\cite{case2008wigner,braunstein2005quantum}. If the Wigner function of the sates and operations in a given experiment is non-negative, it provides an efficient probabilistic classical simulation over the phase space for~\cite{veitch_simulation_2012,mari_wigner_2012,veitch2013efficient,delfosse_contextuality_2015,raussendorf_contextuality_2017,bermejo-vega_contextuality_2017,raussendorf_phase-space-simulation_2020}. Thus, Wigner negativity of quantum states and processes is a necessary resource for  quantum computational computing.
	
	A seminal notion of quantumness is the  contextuality of a quantum subtheory, defined as the non-existence of a ``classical" non-contextual hidden variable description description~\cite{kochen1975problem,bell1966problem,mermin1990simple,peres1991two,mermin1993hidden,abramsky2011sheaf}.	The latter are models reproducing quantum mechanics from predetermined assignments of measurement outcomes to observables that do not depend on the specific  measurement context. Similarly to the negativity of the Wigner function, contextuality has been identified as a necessary resource for quantum computational power in several models of quantum computation~\cite{anders_contextuality_2009,raussendorf_contextuality_2013,howard_contextuality_2014,delfosse_contextuality_2015,raussendorf_contextuality_2017,bermejo-vega_contextuality_2017}.
	
    Previous works have asked what is the connection between these two central notions of classicality.	
    Ref.~\cite{spekkens_equivalent_2008} proved that generalized contexuality in the sense of Spekkens is equivalent to the non-existence of a non-negative quasi-probability representation. However, in order to check whether a system is contextual, one would have to consider all possible  quasi-probability distributions. Later, for the concrete model of quantum computation with Clifford operations and magic states on odd-dimensional qudits, Ref.~\cite{howard_contextuality_2014} proved that traditional Bell-Kochen-Specker contextuality is equivalent to the negativity of the discrete Wigner function \cite{gross_hudson_2006,gibbons2004discrete}, using the graph theoretic approach to contextuality~\cite{cabello2014graph}.
	In particular, for single prime-dimensional qudit states, they prove  equivalence between contextuality with respect to Pauli measurements and negativity of the discrete Wigner. 
	The result was simplified and generalized to general multiqudit systems of odd local dimensions in~\cite{delfosse_wigner_2014}.
	
	In this work we show such an equivalence for continuous-variable quantum optical systems. We show that non-contextuality of quantum states with respect to homodyne measurements implies the non-negativity of their Wigner function representation (Theorem~\ref{theorem:main}). This implies the existence of  a non-contextual hidden variable model that for the subtheory of Gaussian quantum optical states, measurements and transformations, in the generalized sense of Spekkens~\cite{spekkens2005contextuality} (Section \ref{sect:FullGaussianOpticsAndSpekkens}). Furthermore, the non-contextual representation based on the non-negative Wigner function is unique. This is proven for the canonical Wigner representation studied in quantum optics~\cite{wigner1997quantum}.	Remarkably, the phase space methods for the discrete-variable case where developed much later than those for continuous variables, but the equivalence was first proven for the discrete case. During the completion of this manuscript, we became aware of an independent proof of our main result~\cite{booth2021contextuality}. Ref.~\cite{booth2021contextuality} provides an analogous proof to  Theorem~\ref{theorem:main} phrased in the continuous-variable extension~\cite{barbosa2019continuous} of the sheaf-theoretic framework for contextuality of Abramsky and Branderburger~\cite{abramsky2011sheaf}.
	
	Our result has several implications. For models of optical quantum computing with continuous variables, negativity of the Wigner function was identified earlier as a necessary resource for a superpolynomial quantum speed-up~\cite{mari_wigner_2012,veitch_simulation_2012}. 	As a consequence, contextuality is necessary to rule out the existence of an efficient classical simulation and for a quantum speed-up in continuous-variable quantum optics. As an example, multiple non-Gaussian  quantum resources that have been studied in the literature are contextual. This includes the Gottesman-Kitaev-Preskill comb state, which is used in fault-tolerant schemes of  continuous-variable quantum computation~\cite{gottesman2001encoding,terhal2020towards} and is a universal resource for Gaussian quantum operations~\cite{baragiola2019all}. Other examples include cubic phase gates \cite{gottesman2001encoding}, which are also universal resources. Additionally, Wigner-negative quantum resources used in quantum advantage setups, such as boson sampling schemes \cite{aaronson_computational_2013} and  variants thereof \cite{lund_boson_2014,chabaud_continuous-variable_2017,kruse_detailed_2019,hamilton_gaussian_2017,quesada_gaussian_2018} are contextual as well: this includes input states such as	photon number states, photon-added or photon-subtracted squeezed states, and measurements such as photon counters, single-photon detectors and threshold detectors.
	
	Combined with our result, Hudson's theorem~\cite{hudson_wigner_1974} identifies the pure states that are contextual with respect to Gaussian operations to be the set of Gaussian states. Hence, non-contextuality is equivalent to Gaussianity for pure states. For mixed states, it was shown that non-negative states can however be non-Gaussian~\cite{veitch_simulation_2012}, and this extends to contextual mixed states via our equivalence. 
    
    Our result also singles out the Wigner negativity of a state or detector as a natural quantifier for contextuality in continuous-variable quantum optics.  The Wigner logarithmic negativity is a computable monotone, which gives rise to a resource theory for quantifying this type of contextuality \cite{albarelli_resource_2018}. Because of Hudsons's theorem,  theories of non-Gaussianity \cite{albarelli_resource_2018,lami_gaussian_2018} could also be employed to quantify contextuality in pure states scenarios.
    Furthermore, these allows one to conceive experimental protocols for measuring, certifying or  witnessing  contextuality using protocols that are available for Wigner negativity ~\cite{nogues2000measurement,kurtsiefer1997measurement,chabaud2021witnessing,banaszek1999direct,hlouvsek2021direct,bertet2002direct}.

	\section{Phase space formulation of continuous variable systems}\label{section:preliminaries}
	We are working with a system of $m$ modes of continuous variables described by a Hilbert space $L^2(\mathbb{R}^m)$.
	With $q_1,...,q_m, p_1,...,p_m$ denoting the usual quadrature operators, we can introduce the displacement operators:
	
	\begin{equation}
	D(\zeta):=e^{\ii(\zeta^{T}\omega\hat{R})}=e^{\ii\sum_ip_i\hat{q}_i-x_i\hat{p}_i}
	\end{equation}
	with $\zeta=(x_1,...,x_m,p_1,...,p_m)\in\mathbb{R}^{2m}$, $\hat{R}:=(\hat{q}_1,...,\hat{q}_m,\hat{p}_1,...,\hat{p}_m)$ and $\omega$ being the matrix representation of the standard symplectic form on $\mathbb{R}^{2m}$:
	\begin{equation}
	\omega:=\begin{pmatrix}
	0&-\mathbb{1}_m\\
	\mathbb{1}_m & 0
	\end{pmatrix}.
	\end{equation}
	Furthermore, we introduce the following notation for the symplectic form on $\mathbb{R}^{2m}$:
	\begin{equation}
	[u,v]\coloneqq u^{T}\omega v.
	\end{equation}
	
	Given a state in form of a density matrix $\rho\in B(\mathcal{H})$, the displacement operator can be used to define its characteristic function $\chi_{\rho}:\mathbb{R}^{2m}\to \mathbb{R}$ via:
	\begin{equation}
	\chi_{\rho}(\zeta):=\tr[\rho D(\zeta)].
	\end{equation}
	
	The \emph{Wigner function} is a function on the phase space that reproduces the predictions of quantum theory. 
	In general, these are functions from the phase space into the real numbers.
	The Fourier transform is naturally defined on $L^1$ in its usual integral form.
	While it can be extended from $L^1\cap L^2$ to a full unitary operator on $L^2$, here we want to work with the integral form of the Fourier transform and therefore restrict to states in this intersection. 
	By further assuming that the Fourier transform of the state is in $L^1$, we can ensure that the Wigner function is always a well defined integrable function:
	\begin{definition}[Wigner function]\label{definition:wignerfunction}
Given a density operator $\rho=\sum_{\alpha}\ket{\psi_{\alpha}}\bra{\psi_{\alpha}}$,such that $\psi_{\alpha} \in L^2(\mathbb{R}^m)\cap  L^1(\mathbb{R}^m)$ and $\mathcal{F}(\psi_{\alpha})\in L^1(\mathbb{R}^m)$, where $\mathcal{F}$ denotes the Fourier transform. 
 We define the \emph{Wigner function} associated to $A$ as the symplectic Fourier transformation of the characteristic function:
		\begin{equation}
		W_{\rho}(\zeta):=\frac{1}{2\pi}\int\chi_{\rho}(v)e^{\ii [v,\zeta]}\mathrm{d} v.
		\end{equation}
	\end{definition}
	For all $\rho$, it is $W_{\rho}\in L^2(\mathbb{R}^{2m})$ (see e.g.~\cite[Thm.~9]{degosson_harmonic_2017}).
The Wigner function is normalized in the sense that 
	\begin{equation}
	\int W_{\rho}(v)\mathrm{d}v=1.
	\end{equation}
	Nevertheless, the Wigner function is a quasi-probability measure because it admits negativ values in general. 
    This negativity is often suggested as a candidate for non-classicality. 
	Another is the abscence of non-contextual value assignments as discussed in the next section.
	
 For the following we need to recall the notion of a spectral measure. 
The quadrature operators do not have a discrete decomposition into projectors onto eigenspaces.
However, by the spectral theorem, one can find for each self-adjoint operator $A:\mathcal{H}\to \mathcal{H}$ on a Hilbert space a projector-valued measure $\Pi_{A}$.
That is a map $\Pi_{A}:B[\mathrm{Spec}(A)]\to \mathrm{P}(\mathcal{H})$, where $B[\mathrm{Spec}(A)]$ is the Borel algebra and $P(\mathcal{H})$ the set of orthogonal projectors acting on $\mathcal{H}$.
$\Pi_{A}$ satisfies direct analogues of the usual axioms for measures.
This allows to define a direct analogue of a projector valued integral over these measures.
The spectral measure $\Pi_A$ satisfies that the map $\mu_{\rho}$ defined by
\begin{equation}
    \mu_{\rho}(X):=\mathrm{Tr}[\rho \Pi_A(X)]
\end{equation}
for a set $X$ is a measure such that the following holds for all integrable functions $f:\mathrm{Spec}(A)\to \mathbb{R}$:
\begin{equation}
    A=\int \lambda \mathrm{d}\Pi_A(\lambda).
\end{equation}
Moreover, it allows to define the following functional calculus:
\begin{equation}
    f(A)=\int f(\lambda) \mathrm{d}\Pi_A(\lambda).
\end{equation}
For expectation values over a state $\rho$ it holds that
\begin{equation}
\mathrm{Tr}\left[\rho A\right]=\int f(\lambda) \mathrm{d}\mu_{\rho}(\lambda),
\end{equation}
which we also denote by $\int f(\lambda) \mathrm{Tr}[\mathrm{d}\Pi(\lambda)\rho]$.

The Wigner function has the important property that it reproduces the expectation value of an observable $f(\zeta \hat{R})$ (see e.g. \cite[Sec.~4.2.3.]{degosson_harmonic_2017}):
Whenever the function $W_{\rho}W_{f(\zeta\hat{R})}$ is integrable, we have
	\begin{equation}\label{eq:wignerstatistical}
	\tr\left[\rho f\left(\zeta\hat{R}\right)\right]=\int W_{\rho}(v)W_{f(\zeta\hat{R})}(v)\mathrm{d} v,
	\end{equation}
	with $i\in \{1,...,2m\}$ and notating the admissible observables $\zeta\hat{R}:=\sum_i\zeta_i\hat{R}_i$.

We introduce the following notation: We use $\bullet$ for arguments of functions. 
E.g. $\sqrt{\bullet}$ denotes the function $x\mapsto \sqrt{x}$.

The Wigner function of a state is a special case of the Wigner transform that is well-defined as the inverse of the \textit{Weyl-quantization}, which for polynomially bounded functions $f:\mathbb{R}^{k}\to\mathbb{C}$ yields well defined operators $Q[f(\zeta_1\cdot\bullet,...,\zeta_k\cdot\bullet)]$ such that~\cite[Thm.~2.18]{folland_harmonic_1989}:
\begin{equation}\label{eq:weyl-inverse}
W_{Q[f]}=f.
\end{equation}
	
	\section{Hidden variable models and contextuality}
	Here we define non-contextual hidden variable models.
	These are value assignments $\lambda_{\varphi}$ for observables such as the homodyne measurements defined by $\zeta\hat{R}$.
	Contextuality is an assumption about their behaviour for joint measurements of commuting homodyne observables.
	We refer to a set
	\begin{equation}
	C=\left\{\zeta_1\hat{R},...,\zeta_k\hat{R}\right\}
	\end{equation}
	of pair-wise commuting homodyne observables as a \textit{context}.
	
	The following key lemma is analogous to~\cite[Lem.~1]{delfosse_wigner_2014}.
		\begin{lemma}\label{lemma:additive}
Consider a multiplicative function $\lambda:\mathbb{R}^{2m}\to\mathbb{R}$ such that $\lambda(\zeta_1+\zeta_2)=\lambda(\zeta_1)+\lambda(\zeta_2)$ for all $\zeta_1,\zeta_2\in \mathbb{R}^{2m}$ with $[\zeta_1,\zeta_2]=0$.
Then, $\lambda$ is a linear functional.
	\end{lemma}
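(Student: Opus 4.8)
The plan is to first extract homogeneity from the multiplicative hypothesis, then to prove linearity on each maximal commuting (isotropic) subspace, and finally to bridge non-commuting directions by borrowing auxiliary phase-space degrees of freedom.

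First I would record the elementary consequences. Taking the (commuting) pair $\zeta_1=\zeta_2=0$ gives $\lambda(0)=0$, and applying additivity to the commuting pair $\zeta,-\zeta$ gives $\lambda(-\zeta)=-\lambda(\zeta)$. Multiplicativity upgrades this to full real homogeneity $\lambda(c\zeta)=c\,\lambda(\zeta)$ for all $c\in\mathbb{R}$, since a scalar multiple $c\zeta$ always commutes with $\zeta$. Consequently, on any isotropic subspace $L$ (one on which $[\cdot,\cdot]$ vanishes identically, so that all pairs commute) the hypothesis makes $\lambda|_L$ additive, and together with homogeneity $\lambda|_L$ is an honest linear functional. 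Fixing a symplectic basis $\{e_i,f_i\}_{i=1}^m$ with $[e_i,f_i]=1$ and all remaining brackets among basis vectors vanishing, I set $a_i:=\lambda(e_i)$, $b_i:=\lambda(f_i)$ and let $\Lambda$ be the linear functional taking these values; it then remains to show $\lambda=\Lambda$.

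Because $\lambda$ already agrees with $\Lambda$ on every isotropic subspace, the only possible discrepancy lives inside the symplectic planes $\Pi_i=\mathrm{span}\{e_i,f_i\}$, where $e_i$ and $f_i$ fail to commute. I would measure it by the defect $D_i(\alpha,\beta):=\lambda(\alpha e_i+\beta f_i)-\alpha a_i-\beta b_i$, which is homogeneous of degree one and vanishes on the axes. The key maneuver is to decompose the target vector $\alpha e_i+\beta f_i$ in two ways using an auxiliary conjugate pair $e_j,f_j$ ($j\neq i$) drawn from the symplectic complement of $\Pi_i$: the sets $\{\alpha(e_i+e_j),\beta(f_i-f_j)\}$ and $\{\alpha(e_i-e_j),\beta(f_i+f_j)\}$ are each isotropic, so additivity applies to them, while the remainders $\alpha e_j\mp\beta f_j$ commute with all of $\Pi_i$. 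Evaluating $\lambda$ along the two decompositions and using $\lambda(-\zeta)=-\lambda(\zeta)$ produces the two relations $D_i(\alpha,\beta)=D_j(-\alpha,\beta)$ and $D_i(\alpha,\beta)=-D_j(-\alpha,\beta)$, whose sum forces $D_i\equiv 0$. Hence $\lambda$ is linear on each $\Pi_i$, and since any $\zeta_1,\zeta_2$ with $[\zeta_1,\zeta_2]\neq 0$ span such a symplectic plane in a suitably adapted basis, full additivity — and thus linearity of $\lambda$ — follows.

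The step I expect to be the genuine obstacle is precisely this last bridging. The non-contextuality hypothesis constrains $\lambda$ only across jointly measurable (commuting) observables, so conjugate quadratures within a single mode are never directly comparable; no chain of commuting additions internal to a plane $\Pi_i$ relates $\lambda(e_i+f_i)$ to $\lambda(e_i)$ and $\lambda(f_i)$. The construction above circumvents this by routing the comparison through an extra mode, so that the nonlinear defect appears with opposite signs in two isotropic decompositions and cancels. It is this use of auxiliary phase-space directions that does the real work, mirroring the role of the multi-qudit structure in the discrete analysis of Ref.~\cite{delfosse_wigner_2014}.
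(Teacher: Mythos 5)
Your proof is correct, and it rests on the same key idea as the paper's proof of Lemma~\ref{lemma:additive}: the conjugate directions inside a single plane $\mathrm{span}(e_i,f_i)$ are never directly comparable under commuting additivity, so one routes the comparison through an auxiliary mode $j\neq i$ with signs arranged so that the symplectic brackets cancel (hence both arguments need $m\geq 2$, consistent with Theorem~\ref{theorem:main}). The implementations differ in the decomposition used. The paper sets $u=\alpha_i e_i$, $v=\beta_i f_i$ and the \emph{cross-paired} auxiliaries $u'=\beta_i e_j$, $v'=\alpha_i f_j$, writes $u+v=\tfrac{1}{2}\left[(u+v+u'+v')+(u+v-u'-v')\right]$, verifies that the two halves commute, as do the further splittings $u\pm v'$ and $v\pm u'$, and collapses the whole chain to $\lambda(u)+\lambda(v)$ in a single computation. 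You instead pair \emph{parallel} directions with a sign flip, via the two isotropic pairs $\{\alpha(e_i+e_j),\beta(f_i-f_j)\}$ and $\{\alpha(e_i-e_j),\beta(f_i+f_j)\}$, and encode the outcome in the defect relations $D_i(\alpha,\beta)=D_j(-\alpha,\beta)$ and $D_i(\alpha,\beta)=-D_j(-\alpha,\beta)$; I verified both relations (the first uses oddness of $D_j$, which follows from $\lambda(-\zeta)=-\lambda(\zeta)$), and their sum indeed forces $D_i\equiv 0$. Your variant makes the sign cancellation more transparent and uses multiplicativity slightly more weakly in the core step — only the sign flip $\lambda(-\zeta)=-\lambda(\zeta)$, rather than the factor-$\tfrac{1}{2}$ rescaling the paper's half-sum requires — while the paper's chain is shorter. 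Two cosmetic points: real homogeneity $\lambda(c\zeta)=c\lambda(\zeta)$ is supplied directly by the multiplicativity hypothesis, not by your observation that $c\zeta$ commutes with $\zeta$ (commuting additivity alone would only give $\mathbb{Q}$-homogeneity, and a Cauchy-type pathology is not excluded without the hypothesis); and your closing adapted-basis maneuver is superfluous, since once $D_i\equiv 0$ on every plane, the cross-plane splitting $\lambda\left[\sum_i(\alpha_i e_i+\beta_i f_i)\right]=\sum_i\lambda(\alpha_i e_i+\beta_i f_i)$ — the paper's opening step, which you use implicitly — already gives $\lambda=\Lambda$ on all of $\mathbb{R}^{2m}$.
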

	\begin{proof}
		Consider the canonical basis of $\mathbb{R}^{2m}$ that we denote by $(e_1,...,e_m,f_1,...,f_m)$.
		We define the planes $P_i:=\mathrm{span}(e_i,f_i)$. By the first assumption on value assignments, we immediately obtain
		\begin{equation}
		\lambda\left[\sum_{i=0}^n\alpha_ie_i+\beta_if_j\right]=\sum_{i=1}^n\lambda[\alpha_ie_i+\beta_i f_i].
		\end{equation}
		It thus suffices to work with the restrictions of $\lambda$ to the planes $P_i$. 
		We work with a second plane $P_j$ with $i\neq j$ and introduce the notation $u:=\alpha_i e_i, v:=\beta_i f_i$ and $u':=\beta_ie_j, v':=\alpha_if_j$. 
		We can decompose
		\begin{equation}
		u+v=\frac{1}{2}((u+v+u'+v')+(u+v-u'-v')),
		\end{equation}
		where both braced summands yield commuting operators:
		\begin{align*}
		[(u+&v+u'+v'),(u+v-u'-v')]\\
		& = [u,v]
		+[v,u]-[u',v']-[v',u']=0
		\end{align*}
		Similarly, we have that $[u\pm v',v\pm u']=0$.
		Combined, we obtain:
		\begin{align}
		\begin{split}
		\lambda(u+&v)=\\
		=&\lambda\left(\frac{1}{2}[(u+v+u'+v')+(u+v-u'-v')]\right)\\
		=&\frac{1}{2}\lambda(u+v+u'+v')+\frac{1}{2}\lambda(u+v-u'-v')\\
		=&\frac{1}{2}[\lambda(u+v')+\lambda(v+u')+\lambda(u-v')+\lambda(v-u')]\\
		=&\frac{1}{2}(\lambda(u)+\lambda(v')+\lambda(v)+\lambda(u')\\
		&+\lambda(u)+\lambda(v')-\lambda(v)-\lambda(u'))\\
		=&\lambda(u)+\lambda(v),
		\end{split}
		\end{align}
		which completes the proof.
	\end{proof}
Lemma~\ref{lemma:additive} has the following immediate consequence: Any assignments of values on operators that are additive on contexts can be identified with elements in $(\mathbb{R}^{2m})^*\cong \mathbb{R}^{2m}$.
Non-contextuality ensures this for hidden variable assignments and we can restrict to hidden variables being linear functionals.

Similarly to \cite{delfosse_wigner_2014}, we consider \textit{inferable observables} whose value can be directly inferred from observable quantities and classical postprocessing: namely, we call an observables $A$ ``inferable'' if $A=f(\zeta_1\hat{R},...,\zeta_k\hat{R})$ where
$\{\zeta_1\hat{R},...,\zeta_k\hat{R}\}$ is a context and $f:\mathbb{R}^k\to\mathbb{C}$ is a function.
	\begin{definition}[non-contextual HVM]\label{definition:HVM}
		Let $\rho$ be a density operator over $\mathcal{H}$ with well defined Wigner function as in Definition~\ref{definition:wignerfunction}.
		A hidden variable model for $\rho$ consists of a non-empty measurable space of hidden states $S$, a probability measure $\nu_{\rho}$ on $S$ and conditional probabilities $p_C(s|\varphi)$ for each context $C$, $\varphi \in S$ and $s\in \mathbb{R}^{k}$. 
		These are subject to the following conditions:
		\begin{enumerate}
			\item Consider a context
$ C=\{\zeta_1\hat{R},...,\zeta_k\hat{R}\}$.
			 For any internal state $\varphi$ there is an assignment $\lambda_{\varphi}$ such that all inferable observables $A$ have definite values $\lambda_{\varphi}(A)$ and
			\begin{equation}
			p_C(s|\lambda_{\varphi})=\prod_{i,\zeta_i \hat{R}\in C}\delta(s_i-\lambda_{\varphi}[\zeta_i \hat{R}]).
			\end{equation}
			\item Let $f:\mathbb{R}^{k}\to \mathbb{C}$ be a polynomial map.
			Then, it holds that
			\begin{equation}\label{eq:nc_condition}
			\lambda_{\varphi}\left[f\left(\zeta_1\hat{R},...,\zeta_k\hat{R}\right)\right]=f\left(\lambda_{\varphi}\left[\zeta_1\hat{R}\right],...,\lambda_{\varphi}\left[\zeta_k\hat{R}\right]\right).
			\end{equation}
			\item Recovering the predictions of quantum mechanics: For a smooth and bounded function, we require
			\begin{equation}\label{eq:recoverQM1}
			\tr\left[f\left(\zeta\hat{R}\right)\rho\right]=\int_S f\left(\lambda_{\varphi}\left[\zeta\hat{R}\right]\right)\mathrm{d}\nu_{\rho}(\varphi).
			\end{equation}
		\end{enumerate}
	\end{definition}
	
	A more operational definition of contextual hidden variable models can be formulated using the spectral theorem. 
	We give such a definition in Section~\ref{appendix:definition}, Definition~\ref{definition:HVM2} and prove equivalence to Definition~\ref{definition:HVM}.
	In the main text we work solely with Definition~\ref{definition:HVM} as it is most useful to show the equivalence with Wigner function negativity.
	
	Contextuality is defined by what it is not:
	\begin{definition}
		A state $\rho$ is \emph{contextual} if it does not admit a non-contextual hidden variable model.
	\end{definition}
	\section{Equivalence between Contextuality and Wigner non-negativity}\label{sec:Equivalence}
	In this section we prove that both previously defined notions of non-classicality are in fact equivalent. We first focus on prepare and measure quantum scenarios. We discuss the case for general quantum scenarios in Section~\ref{sect:FullGaussianOpticsAndSpekkens}.
	\begin{theorem}\label{theorem:main}
		Let $m\geq 2$ and let $\rho$ be a state. $\rho$ admits a non-contextual value assignments if and only if $W_{\rho}\geq 0$.
	\end{theorem}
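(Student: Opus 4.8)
The plan is to establish the two implications separately, using the linearity of value assignments from Lemma~\ref{lemma:additive} as the bridge between the hidden-variable picture and the phase-space picture. In both directions the crucial identity is that Weyl quantization~\eqref{eq:weyl-inverse} gives $W_{f(\zeta\hat R)}(v)=f(\zeta\cdot v)$, so that each homodyne observable $\zeta\hat R$ carries the linear phase-space symbol $v\mapsto\zeta\cdot v$.

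For the ``if'' direction I would take the phase space $\mathbb{R}^{2m}$ itself as the space of hidden states, with probability measure $\rmd\nu_\rho:=W_\rho\,\rmd v$, which is a bona fide probability measure precisely because $W_\rho\geq 0$ and $\int W_\rho=1$. To each point $v$ I associate the linear assignment $\lambda_v[\zeta\hat R]:=\zeta\cdot v$, so that inferable observables receive definite values $f(\zeta_1\cdot v,\dots,\zeta_k\cdot v)$ and condition~1 holds with deterministic (delta-function) response. Condition~2 is immediate since $\lambda_v[f(\zeta\hat R)]=f(\zeta\cdot v)=f(\lambda_v[\zeta\hat R])$, and condition~3 follows from the statistical formula~\eqref{eq:wignerstatistical} together with $W_{f(\zeta\hat R)}(v)=f(\zeta\cdot v)$, giving $\tr[f(\zeta\hat R)\rho]=\int W_\rho(v)f(\zeta\cdot v)\,\rmd v=\int_S f(\lambda_v[\zeta\hat R])\,\rmd\nu_\rho(v)$.

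For the harder ``only if'' direction, suppose $\rho$ admits a non-contextual hidden variable model $(S,\nu_\rho,\{\lambda_\varphi\})$. The first step is to invoke Lemma~\ref{lemma:additive}: conditions~1 and~2 make the map $\zeta\mapsto\lambda_\varphi[\zeta\hat R]$ additive and homogeneous on commuting homodyne observables, hence a linear functional, so there is a vector $v_\varphi\in\mathbb{R}^{2m}$ with $\lambda_\varphi[\zeta\hat R]=\zeta\cdot v_\varphi$. This is exactly where the hypothesis $m\geq 2$ is used, since the proof of Lemma~\ref{lemma:additive} exploits an auxiliary plane $P_j$ to commute operators. Using the measurability of $\varphi\mapsto v_\varphi$ (inherited from condition~3 applied to suitable test functions), I would push $\nu_\rho$ forward along this map to a genuine probability measure $\mu$ on $\mathbb{R}^{2m}$, so that condition~3 reads $\tr[f(\zeta\hat R)\rho]=\int_{\mathbb{R}^{2m}}f(\zeta\cdot v)\,\rmd\mu(v)$ for every smooth bounded $f$.

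The key step is to identify $\mu$ with $W_\rho\,\rmd v$. Applying both expressions for $\tr[f(\zeta\hat R)\rho]$ to the smooth bounded functions $f=\cos(t\,\bullet)$ and $f=\sin(t\,\bullet)$ and recombining, one obtains for every $t\in\mathbb{R}$ and $\zeta\in\mathbb{R}^{2m}$
\begin{equation*}
\int_{\mathbb{R}^{2m}}\e^{\ii t\,\zeta\cdot v}\,\rmd\mu(v)=\int_{\mathbb{R}^{2m}}\e^{\ii t\,\zeta\cdot v}\,W_\rho(v)\,\rmd v.
\end{equation*}
As $t\zeta$ ranges over all of $\mathbb{R}^{2m}$, these are the Fourier transforms of the finite measure $\mu$ and of the finite signed measure $W_\rho\,\rmd v$ at an arbitrary point; by uniqueness of the Fourier transform they must coincide as measures, forcing $W_\rho\geq 0$ since $\mu\geq 0$. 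The main obstacle I anticipate is not a single deep step but the analytic bookkeeping around this Fourier argument: ensuring the relevant characteristic functions are well defined (this is why the integrability assumptions of Definition~\ref{definition:wignerfunction} are imposed and why the statistical formula~\eqref{eq:wignerstatistical} applies for bounded $f$), verifying measurability of $\varphi\mapsto v_\varphi$ so that $\mu$ exists, and checking that the test functions admitted by condition~3 are rich enough to separate measures. Restricting attention to homodyne observables and to exponential test functions keeps these points manageable.
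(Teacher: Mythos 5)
Your proposal is correct, and its overall skeleton matches the paper's proof: the ``if'' direction is the same construction (hidden states $=$ phase space, $\rmd\nu_\rho = W_\rho\,\rmd v$, $\lambda_v[\zeta\hat R]=\zeta\cdot v$, with Weyl quantization of products of commuting linear symbols justifying condition~2 and Eq.~\eqref{eq:wignerstatistical} giving condition~3), and the ``only if'' direction likewise runs through Lemma~\ref{lemma:additive} plus exponential test functions in condition~3. Where you genuinely diverge is the measure-theoretic endgame. The paper computes $W_\rho$ directly as the symplectic Fourier transform of $\chi_\rho(v)=\int_S e^{\ii\lambda_\varphi(v^T\omega)}\rmd\nu_\rho(\varphi)$, first \emph{assuming} $\nu_\rho$ absolutely continuous to get $W_\rho(\zeta)=q_\rho(-\zeta)\geq 0$ by Fourier inversion, and then closing the loop with a separate bootstrap: since $W_\rho$ is a well-defined function, the characteristic function $\Phi_\rho$ of $\nu_\rho$ must be integrable, and the inversion theorem for characteristic functions then forces absolute continuity. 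You instead push $\nu_\rho$ forward to a finite positive measure $\mu$ on $\mathbb{R}^{2m}$ and compare characteristic functions: equality of the Fourier transforms of $\mu$ and of the finite signed measure $W_\rho\,\rmd v$ at every point $t\zeta$, plus injectivity of the Fourier transform on finite signed measures, gives $\mu = W_\rho\,\rmd v$ outright, so positivity of $W_\rho$ and absolute continuity of $\mu$ come for free, with no conditional assumption to discharge. Your route is arguably the cleaner logical organization of the same Fourier idea; what the paper's route buys in exchange is the explicit identification $q_\rho(\zeta)=W_\rho(-\zeta)$ of the hidden-variable density with the Wigner function, which underlies the paper's later claim that the nonnegative Wigner representation is the \emph{unique} noncontextual model (your argument in fact yields this too, via $\mu=W_\rho\,\rmd v$, and you could state it). Two small points in your favor: you correctly flag that homogeneity of $\lambda_\varphi$ comes from condition~2 (Lemma~\ref{lemma:additive} alone only gives additivity), and you note the measurability of $\varphi\mapsto v_\varphi$ needed for the pushforward, both of which the paper passes over silently; your replacement of $e^{\ii t\,\bullet}$ by $\cos$ and $\sin$ to stay within the real smooth bounded test functions of condition~3 is also a legitimate, if minor, refinement.
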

	\begin{proof}[Proof of Theorem~\ref{theorem:main}]
		\underline{$W_{\rho}\geq 0$ $ \implies$ Def.~\ref{definition:HVM}}:
	 The non-negativity of the Wigner function implies the existence of a non-contextual value assignment as follows:
		The measure space of hidden variables $S$ can be taken to be the phase space $S=\mathbb{R}^{2m}$.
		Thus, we can set the measure $\nu_{\rho}=q_{\rho}\mu$ with  $q_{\rho}:=W_{\rho}$, where $\mu$ denotes the Lebesgue measure.
		$W_{\rho}$ is a probability distribution whenever $W_{\rho}$ is a positive function. 
		
		For all inferable obvservables $f(\zeta_1\hat{R},...,\zeta_k\hat{R})$ we set
		\begin{equation} \lambda_{\varphi}[f(\zeta_1\hat{R},...,\zeta_k\hat{R})]:= W_{f(\zeta_1\hat{R},...,\zeta_k\hat{R})}(\varphi).
		\end{equation}
		 This immediately defines the conditional probabilities $p_C(s|\varphi)$ as in Definition~\ref{definition:HVM}(1.). 
		 	By Eq.~\eqref{eq:wignerstatistical}, it holds that for an inferable observable $f(\zeta\hat{R})$ and in particular for $f$ bounded, we have
		\begin{equation}\label{eq:traceWigner}
		\tr\left[\rho f(\zeta\hat{R})\right]=\int W_{\rho}(v)W_{f(\zeta\hat{R})}(v)\mathrm{d}v.
		\end{equation}
		
        Finally non-contextuality follows from standard arguments for the Wigner function:
		Consider a context $C=\left\{\zeta_1\hat{R},...,\zeta_k\hat{R}\right\}$.
	   Then, the Weyl quantization yields by~\cite[Prop.~13.3]{hall_quantum_2013}
		\begin{align}
		\begin{split}
		Q\left(W_{\zeta_1\hat{R}}...W_{\zeta_{k}\hat{R}}\right)
		&=Q\left[(\zeta_1\cdot \bullet)...(\zeta_k\cdot \bullet)\right]\\
		&= \frac{1}{k!}\sum_{\sigma\in S_k}\zeta_{\sigma(1)}\hat{R}...\zeta_{\sigma(k)}\hat{R}\\
		&=\zeta_1\hat{R}...\zeta_k\hat{R},
		\end{split}
		\end{align}
		where we used that all operators $\zeta_i \hat{R}$ commute in the third equation.
		Taking the Wigner function of this expression combined with Eq.~\eqref{eq:weyl-inverse} implies 
		\begin{equation}
		W_{\zeta_1\hat{R}...\zeta_k\hat{R}}=W_{\zeta_1\hat{R}}...W_{\zeta_k\hat{R}}.
		\end{equation}
		Thus, we see that $W_{\bullet}$ respects multiplications for commuting homodyne operators.
		This can be directly extended to all polynomials, which immediately verifies Eq.~\eqref{eq:nc_condition}.
		\newline
		\underline{$W_{\rho}\geq 0 \impliedby$ Def.~\ref{definition:HVM}}: 
		Given a set of non-contextual value assignment for a state $\rho$, we can compute the corresponding Wigner functions:
		\begin{align}
		\begin{split}\label{eq:wignercalculation}
		W_{\rho}(\zeta)=&\frac{1}{2\pi}\int\chi_{\rho}(v)e^{\ii [v, \zeta]}\mathrm{d} v\\
		=& \frac{1}{2\pi}\int e^{\ii [v,\zeta]}\tr[D(v)\rho]\mathrm{d} v\\
		=& \frac{1}{2\pi}\int e^{\ii [v,\zeta]}\tr\left[e^{\ii(v^{T}\omega\hat{R})}\rho\right]\mathrm{d} v.\\
	=&\frac{1}{2\pi}\int e^{\ii [v, \zeta]}\left(\int_S e^{\ii\lambda_{\varphi}(v^{T}\omega )}\mathrm{d}\nu_{\rho}(\varphi)\right)\mathrm{d} v.
	\end{split}
		\end{align}
		We can use that every value assignment defines a linear functional by Lemma~\ref{lemma:additive} and Definition~\ref{definition:HVM}(2.). 
		Hence, we can write $\lambda_{\varphi}:\mathbb{R}^{2m}\to \mathbb{R}$ as $\lambda_{\varphi}(\zeta)= \varphi\cdot\zeta$.
		
		Assume that $\nu_{\rho}$ is absolutely continuous and can be written as $q_{\rho}\mu$, with $q_{\rho}\in L^1$ and $\mu$ the Lebesgue measure.
		Then, the Fourier inversion theorem implies that 
		\begin{align}
		\begin{split}
		W_{\rho}(\zeta)
		=&\frac{1}{2\pi}\int_{\mathbb{R}^{2m}}e^{\ii [v,\zeta]}\int_{\mathbb{R}^{2m}} e^{\ii [v,\varphi]} q_{\rho}(\varphi)\mathrm{d}\varphi\mathrm{d}v\\
		=&q_{\rho}(-\zeta)\geq 0.
		\end{split}
		\end{align}
		However, $\Phi_{\rho}:= \int_S e^{\mathrm{i}[v,\varphi]} \mathrm{d}\nu_{\rho}$ is (up to a minus sign in the exponent) the characteristic function of the probability measure $\nu_{\rho}$. 
	    We only work with states $\rho$ such that $W_{\rho}$ is well defined as a function and the integral 
		\begin{equation}
		   W_{\rho}=\int_{\mathbb{R}^{2m}}e^{\mathrm{i}[v,\zeta]} \Phi_{\rho}(v)\mathrm{d}v
		\end{equation}
		is only well defined if $\Phi_{\rho}$ is integrable. 
		By the inversion theorem for characteristic functions of probability measures, this implies that $\nu_{\rho}$ is absolutely continuous, which completes the proof.
	\end{proof}

	\section{An operational definition of contextuality}\label{appendix:definition}
	The following definition of non-contextual hidden variable models is a more operational one than Definition~\ref{definition:HVM} and only requires a HVM for the spectral projectors.
    Different contexts can implement the same family of projector valued measures and hence of the same measurements.
    Moreover, given a context $C$ one can define a measurement by considering a function $f$ of the outcomes that we used in previous sections to define inferable observables. Here, we call the map $f$ a \textit{post-processing map}. 
    The resulting projective measure of a set of outcomes $X$ can then be obtained as the integral over all values $x$ in the spectrum of the context $C$ such that $f(x)\in X$. 
    This projector is given by $\int_{f^{-1}(X)} \mathrm{d}\Pi_C$.
    A non-contextual hidden variable model is then required to assign the same probability measure to these spectral measures independently of the way the measurement was implemented via a post-processing map.
    This motivates the definition of a non-contextual hidden variable model for spectral measures below.
    Compare Ref.~\cite{delfosse_wigner_2014} for a discussion of an analogous definition for discrete systems.
	
	Denote by $\chi_X$ the characteristic function (or indicator function) of a set $X$:
	\begin{equation}
  \chi_{X}(x)=\begin{cases}
    1, & \text{if $x\in X$}.\\
    0, & \text{otherwise}.
  \end{cases}
\end{equation}

Given a state $\rho$ with well defined Wigner function as in Definition~\ref{definition:wignerfunction}.
A hidden variable model for $\rho$ consists of a non-empty measurable space of hidden states $S$ and a value assignment $\lambda_{\varphi}$ for all $\varphi\in S$. 
		The latter assigns to spectral measures $\Pi_{A}$ associated to inferable observables $A$ a probability measure $\lambda_{\varphi}[\Pi_{A}]$. 
		We assume \emph{outcome determinism}:
		\begin{equation}
		\lambda_{\varphi}\left[\Pi_{A}\right]=\delta(\bullet-\lambda_{\varphi}[A])
		\end{equation}
		or equivalently
		\begin{equation}
		    \lambda_{\varphi}\left[\Pi_{A}\right](X)=\chi_X(\lambda_{\varphi}[A])
		\end{equation}
		for some value $\lambda_{\varphi}[A]$.
		
	\begin{definition}[non-contextual HVM, projector version]\label{definition:HVM2}
		We call the above data a \textit{non-contextual hidden variable model} for $\rho$ if the following conditions hold:
		\begin{enumerate}
			\item Non-contextuality: Given a context $C=\{\zeta_1\hat{R},...,\zeta_k\hat{R}\}$ and a polynomially bounded post-processing map $f:\mathbb{R}^k\to\mathbb{C}$, then for all compact $X\subseteq \Spec[f(\zeta_1\hat{R},...,\zeta_k\hat{R})]$:
			\begin{equation}
		\lambda_{\varphi}\left[\int_{f^{-1}(X)}\mathrm{d}\Pi_C\right]
			=\int_{f^{-1}(X)}\mathrm{d}\lambda_{\varphi}(\Pi_C)
\label{eq:non-contextual-projector}.
			\end{equation}
			\item Recovering the predictions of quantum mechanics for compact $X$:
			\begin{equation}\label{eq:recoverQM2}
			\tr[\Pi_{\zeta\hat{R}}(X)\rho]=\int_S\lambda_{\varphi}[\Pi_{\zeta\hat{R}}](X)\mathrm{d}\nu_{\rho}(\varphi).
			\end{equation}
		\end{enumerate}
	\end{definition}

We first prove that the existence of a non-contextual HVM in the sense of Definition~\ref{definition:HVM2} implies the existence of a non-contextual HVM in the sense of Definition~\ref{definition:HVM}.
Then we prove that a positive Wigner function implies the existence of a non-contextual HVM in the sense of Definition~\ref{definition:HVM2}, which completes the argument.

\underline{Def.~\ref{definition:HVM2} $\implies$ Def.~\ref{definition:HVM}}. 
We first show that Eq.~\eqref{eq:non-contextual-projector} implies Eq.~\eqref{eq:nc_condition}.
Notice first that for each context $C$, and every integrable post-processing map $f$ the following holds:
\begin{align}
\begin{split}
    	\int_{f^{-1}(X)}\mathrm{d}\Pi_C&=\int \chi_{f^{-1}(X)}[\zeta_1,\ldots,\zeta_k]\mathrm{d}\Pi_C\\
    	&= \int \chi_X[f(\zeta_1,\ldots,\zeta_k]\mathrm{d}\Pi_C\\
    	&=\chi_X\left[f(\zeta_1\hat{R},\ldots,\zeta_k\hat{R}\right]\\
    	&=\Pi_{f(\zeta_1\hat{R},...,\zeta_k\hat{R})}(X).
    	\end{split}
\end{align}
Therefore, Eq.~\eqref{eq:non-contextual-projector} can equivalently be formulated as
\begin{equation}
	\lambda_{\varphi}\left[\Pi_{f(\zeta_1\hat{R},...,\zeta_k\hat{R})}(X)\right]=\chi_{f^{-1}(X)}(\lambda_{\varphi}[\zeta_1\hat{R}],...,\lambda_{\varphi}[\zeta_k\hat{R}]).
\end{equation}
	By definition of outcome determinism we have 
\begin{equation}
\lambda_{\varphi}\left[\Pi_{f(\zeta_1\hat{R},...,\zeta_k\hat{R})}(X)\right]=\chi_X(\lambda_{\varphi}[f(\zeta_1\hat{R},...,\zeta_k\hat{R})])
\end{equation}
for all $X\subset \Spec[f(\zeta_1\hat{R},...,\zeta_k\hat{R})]$.
 Combined with~\eqref{eq:non-contextual-projector} this implies
	\begin{align}
	\begin{split}
	\chi_X(\lambda_{\varphi}[f(\zeta_1\hat{R}&,...,\zeta_k\hat{R})])\\
	&=\chi_{f^{-1}(X)}(\lambda_{\varphi}[\zeta_1\hat{R}],...,\lambda_{\varphi}[\zeta_k\hat{R}])\\
	&=\chi_X[f(\lambda_{\varphi}[\zeta_1\hat{R}],...,\lambda_{\varphi}[\zeta_k\hat{R}])]
	\end{split}
	\end{align}
	for all $X\subset \Spec[f(\zeta_1\hat{R},...,\zeta_k\hat{R})]$.
	 By choosing $X=\{x\}$, this includes Eq.~\eqref{eq:nc_condition}.
	 In particular, this suffices to invoke Lemma~\ref{lemma:additive}: All $\lambda_{\varphi}$ are linear functions in inferable operators. 
	
	We obtain~\eqref{eq:recoverQM1} from~\eqref{eq:recoverQM2} as follows:
	\begin{align}
	\begin{split}
	\tr[f(\zeta\hat{R})\rho]&=\tr\left[\int f(x)\mathrm{d}\Pi_{\zeta\hat{R}}(x)\rho\right]\\
	&=\int f(x)\tr[\mathrm{d}\Pi_{\zeta\hat{R}}(x)\rho]\\
	&\stackrel{\eqref{eq:recoverQM2}}{=}\int f(x)\int_S\mathrm{d}\lambda_{\varphi}[\Pi_{\zeta\hat{R}}](x)\mathrm{d}\nu_{\rho}(\varphi)\\
	&=\int_S\lambda_{\varphi}\left[\int f(x)\mathrm{d}\Pi_{\zeta\hat{R}}(x)\right]\mathrm{d}\nu_{\rho}(\varphi)\\
	&=\int_S f(\lambda_{\varphi}[\zeta\hat{R}])\mathrm{d}\nu_{\rho}(\varphi).
	\end{split}
	\end{align}
	
\underline{$W_{\rho}\geq 0 \implies$ Def.~\ref{definition:HVM2}}.
	Showing that positivity of $W_{\rho}$ implies the existence of a non-contextual HVM in the sense of Definition~\ref{definition:HVM2} follows from well known facts about the Wigner function (see e.g. Ref.~\cite{folland_harmonic_1989}).
	In particular, we assume that the following statement is standard, but in abscence of a reference we added a proof in Appendix~\ref{appendix:prooflemma}.
\begin{lemma}\label{lemma:wignercomposition}
	Given a context $C=\{\zeta_1\hat{R},...,\zeta_k\hat{R}\}$ and a bounded function $f\in L^2(\mathbb{R}^k)$, then
	\begin{equation}\label{eq:multiplicationWigner}
	W_{f(\zeta_1\hat{R},...,\zeta_k\hat{R})}=f(W_{\zeta_1\hat{R}},...,W_{\zeta_k\hat{R}}).
	\end{equation}
\end{lemma}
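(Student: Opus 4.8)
The plan is to reduce the general identity to the case of exponential functions, for which it follows from the known Wigner symbol of a displacement operator, and then to recover arbitrary bounded $f\in L^2(\mathbb{R}^k)$ by Fourier synthesis. Since $C$ is a context, the operators $\zeta_1\hat R,\dots,\zeta_k\hat R$ pairwise commute, so the spectral theorem furnishes a joint projection-valued measure on $\mathbb{R}^k$ and $f(\zeta_1\hat R,\dots,\zeta_k\hat R)$ is well defined by joint functional calculus. For a character $e_t(x):=e^{\ii t\cdot x}$ with $t\in\mathbb{R}^k$, commutativity and linearity of $\hat R$ give
\[
e_t(\zeta_1\hat R,\dots,\zeta_k\hat R)=\exp\!\Big(\ii\sum_{j=1}^k t_j\,\zeta_j\hat R\Big)=\exp\!\big(\ii\,\eta(t)\hat R\big),\qquad \eta(t):=\sum_{j=1}^k t_j\zeta_j,
\]
so it is a displacement (Weyl) operator. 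Applying Eq.~\eqref{eq:weyl-inverse} to the exponentiated linear symbol, $Q[e^{\ii\,\eta(t)\cdot\bullet}]=e^{\ii\,\eta(t)\hat R}$, its Wigner symbol is the plane wave $v\mapsto e^{\ii\,\eta(t)\cdot v}$ (see~\cite{folland_harmonic_1989,degosson_harmonic_2017}). Since $W_{\zeta_j\hat R}(v)=\zeta_j\cdot v$, the right-hand side of~\eqref{eq:multiplicationWigner} for $f=e_t$ is $e^{\ii\sum_j t_j(\zeta_j\cdot v)}=e^{\ii\,\eta(t)\cdot v}$ as well, which settles the base case.

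Next I would write a general $f$ as a superposition of characters, $f=(2\pi)^{-k/2}\int_{\mathbb{R}^k}\hat f(t)\,e_t\,\mathrm{d}t$, and substitute into the functional calculus to obtain $f(\zeta_1\hat R,\dots,\zeta_k\hat R)=(2\pi)^{-k/2}\int_{\mathbb{R}^k}\hat f(t)\,\exp(\ii\,\eta(t)\hat R)\,\mathrm{d}t$. Rather than commuting the (only distributionally defined) Wigner transform with this integral directly, I would verify~\eqref{eq:multiplicationWigner} weakly, pairing both sides against the Wigner function of a state $\rho$ drawn from a determining family (e.g.\ those with Schwartz-class $W_\rho$). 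Using the base case together with the statistical identity~\eqref{eq:wignerstatistical}, and then Fubini,
\[
\tr\!\big[\rho\,f(\zeta_1\hat R,\dots,\zeta_k\hat R)\big]=(2\pi)^{-k/2}\!\int\!\hat f(t)\!\int W_\rho(v)\,e^{\ii\,\eta(t)\cdot v}\,\mathrm{d}v\,\mathrm{d}t=\int W_\rho(v)\,f(\zeta_1\cdot v,\dots,\zeta_k\cdot v)\,\mathrm{d}v.
\]
Comparing with~\eqref{eq:wignerstatistical} applied to $f(\zeta_1\hat R,\dots,\zeta_k\hat R)$ and letting $\rho$ range over the determining set then yields $W_{f(\zeta_1\hat R,\dots,\zeta_k\hat R)}(v)=f(W_{\zeta_1\hat R}(v),\dots,W_{\zeta_k\hat R}(v))$ almost everywhere.

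The main obstacle is making these last steps rigorous: justifying the Fourier representation and the Fubini exchange, and accommodating the fact that $f(\zeta_1\cdot v,\dots,\zeta_k\cdot v)$ is constant along the kernel of $v\mapsto(\zeta_1\cdot v,\dots,\zeta_k\cdot v)$ and hence need not lie in $L^2(\mathbb{R}^{2m})$, so the identity must be read as an almost-everywhere equality of measurable functions rather than in $L^2$. I would handle this by first proving the statement for $f$ in a convenient dense subclass (Schwartz functions, or those with $\hat f\in L^1(\mathbb{R}^k)$), for which $f(\zeta_1\hat R,\dots,\zeta_k\hat R)$ is a norm-convergent Bochner integral of unit-norm operators and every interchange is legitimate, and then pass to general bounded $f\in L^2(\mathbb{R}^k)$ by approximation, using the $L^2$-isometry of the Weyl--Wigner correspondence on Hilbert--Schmidt operators~\cite{folland_harmonic_1989} to transport the limit to the phase-space side. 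An alternative, purely operator-analytic route is to note that~\eqref{eq:multiplicationWigner} already holds for all polynomials by the product formula established in the proof of Theorem~\ref{theorem:main}, and to extend to bounded $f\in L^2$ by a Stone--Weierstrass/density argument on the joint spectrum; the delicate point there is again the unboundedness of the joint spectrum and the resulting need for a weighted or weak topology in which both the functional calculus and the Wigner transform are jointly continuous.
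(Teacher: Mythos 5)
Your proof is correct in substance but follows a genuinely different route from the paper's. The paper reduces Eq.~\eqref{eq:multiplicationWigner} to a canonical frame: it invokes the symplectic covariance of Weyl quantization, $Q[f\circ A]=U_A Q[f]U_A^\dagger$ with $U_A$ metaplectic (a special case of \cite[Thm.~2.15]{folland_harmonic_1989}), together with the fact that pairwise commuting (i.e.\ symplectically orthogonal, linearly independent) $\zeta_1,\dots,\zeta_k$ can be mapped by some $A\in\mathrm{Sp}(n,\mathbb{R})$ onto the canonical vectors $e_1,\dots,e_k$; in that frame $f(e_1\hat{R},\dots,e_k\hat{R})$ is a multiplication operator in the position variables, where Weyl quantization and the spectral functional calculus visibly coincide for all bounded $f$ simultaneously, and conjugating back gives $f(\zeta_1\hat{R},\dots,\zeta_k\hat{R})=Q[f(\zeta_1\cdot\bullet,\dots,\zeta_k\cdot\bullet)]$ in one stroke --- no Fourier synthesis, Fubini, or density limits are needed. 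You instead prove the identity for characters via displacement operators and then extend by Fourier synthesis and a weak (distributional) pairing against a determining family of states, which is a legitimate and more self-contained argument: it avoids the metaplectic representation and the symplectic-completion lemma entirely, at the price of the measure-theoretic bookkeeping you correctly flag (reading the symbol of the bounded, non-compact operator $f(\zeta_1\hat{R},\dots,\zeta_k\hat{R})$ as a tempered distribution, since $f(\zeta_1\cdot v,\dots,\zeta_k\cdot v)$ is constant along the kernel of $v\mapsto(\zeta_i\cdot v)_i$ and is not in $L^2(\mathbb{R}^{2m})$). One small repair: your appeal to the $L^2$-isometry of the Weyl--Wigner correspondence on Hilbert--Schmidt operators cannot literally carry the final limit, precisely because $f(\zeta_1\hat{R},\dots,\zeta_k\hat{R})$ is not Hilbert--Schmidt; the cleaner way to finish, consistent with the weak framework you already set up, is to take uniformly bounded approximants $f_n\to f$ pointwise a.e.\ (extracted from $L^2$ convergence), get $f_n(\zeta_1\hat{R},\dots,\zeta_k\hat{R})\to f(\zeta_1\hat{R},\dots,\zeta_k\hat{R})$ strongly by dominated convergence in the joint spectral calculus, and pass to the limit in the fixed pairings $\tr[\rho\,\cdot\,]$ with Schwartz-class $W_\rho$. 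Note also that both your argument and the paper's tacitly assume the $\zeta_i$ are linearly independent (the paper needs this for $\zeta_iA=e_i$; you need it for the joint spectrum to fill out $\mathbb{R}^k$ so that an $L^2(\mathbb{R}^k)$ hypothesis on $f$ is meaningful).
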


	Thus, if $W_{\rho}\geq 0$, we can simply choose $S=\mathbb{R}^{2m}$, $q_{\rho}=W_{\rho}$ and $\lambda_{\varphi}(A)=W_{A}(\varphi)$. 
	As $f$ is a polynomial and $\mathbb{R}^k$ is Hausdorff, $f^{-1}(X)$ is compact and therefore $\chi_{f^{-1}(X)}\in L^2(\mathbb{R}^{k})$.
	Finally, by Lemma~\ref{lemma:wignercomposition}, this model satisfies
	\begin{align}
	\begin{split}
	\lambda_{\varphi}[\chi_{f^{-1}(X)}&(\zeta_1\hat{R},...,\zeta_k\hat{R})]\\
	&=W_{\chi_{f^{-1}(X)}(\zeta_1\hat{R},...,\zeta_k\hat{R})}(\varphi)\\
	&=\chi_{f^{-1}(X)}[W_{\zeta_1\hat{R}}(\varphi),...,W_{\zeta_k\hat{R}}(\varphi)]\\
	&=\chi_{f^{-1}(X)}(\lambda_{\varphi}[\zeta_1\hat{R}],...\lambda_{\varphi}[\zeta_k\hat{R}]).
	\end{split}
	\end{align}

	This is precisely the non-contextuality condition in Definition~\ref{definition:HVM2}.

\section{Extension to the full Gaussian quantum optics subtheory}
\label{sect:FullGaussianOpticsAndSpekkens}

Above, we have restricted ourselves to prepare-and-measure quantum scenarios comprising homodyne measurements on continuous-variable quantum states. This scenarios may seem restrictive, since many experiments of interest quantum optics and continuous-variable quantum information  include other types of operations, such as Gaussian POVMs or Gaussian CPTP maps \cite{weedbrook_gaussian_2012}. However, as we now argue, our equivalence result extends to more general operations. In particular, it also holds that Wigner negativity is equivalent to contextuality with respect to the full subtheory of bosonic Gaussian optics, wherein the allowed quantum state operations are preparations of Gaussian states,   measurements of Gaussian POVMs and implementations of Gaussian CPTP maps \cite{giedke_characterization_2002,mari_wigner_2012,weedbrook_gaussian_2012,bartlett_reconstruction_2012}. In this larger setting, POVM measurements are not always sharp (i.e., they do not correspond to standard projective measurement in the quantum formalism) and we also consider transformations. One can study non-contextual models for this setting using Spekkens' generalized framework~\cite{spekkens2005contextuality} of contextuality. It is known that the Wigner function provides a non-contextual hidden variable model for this full Gaussian quantum subtheory \cite{bartlett_reconstruction_2012} . Since Gaussian operations include our previous prepare-and-measure scenarios as a subset, our result immediately carries over: noncontextuality with respect to full Gaussian operations, implies Wigner negativity. Furthermore, our result implies the Wigner function is the unique noncontextual hidden variable model for the Bosonic Gaussian subtheory.

Sequential contextuality, a notion of contextuality for transformations different to Spekkens', has been recently introduced in \cite{Mansfield18_Quantum_Advantage}. Therein, an ontological model is noncontextual if compositions of transformations at the quantum level are mapped to compositions of transformations at the ontological level. In this sense, Bosonic Gaussian quantum optics remains non-contextual. This is easily seen from the fact that the Wigner function $W_{\mathcal{E}}(z,z)$ of a CPTP map $\mathcal{E}$ satisfies a well-behaved composition rule $W_{\mathcal{E}_2\circ \mathcal{E}_1}=W_{\mathcal{E}_2}W_{\mathcal{E}_1}$: this is easily seen using \cite[Eq. (95-99)]{bartlett_reconstruction_2012}.

\section{Conclusion}

In this work we established an equivalence between the negativity of the Wigner function and contextuality of hidden variable models containing homodyne measurements. 
Moreover, we showed how this equivalence can be lifted to contextuality in the general theory of Gaussian quantum optics. As a consequence, there is a clean link between efficient classical simulability, noncontextuality and the existence of a distinguished nonnegative quasiprobability representation: the traditional Wigner function~\cite{wigner1997quantum}.
 We generalized a proof technique developed for discrete, odd dimensional quantum systems~\cite{delfosse_contextuality_2015} to the continuous variable setting, which contains subtleties regarding the integrability of functions.

\emph{Open questions.} Our work does not apply to discrete-variable systems of even dimension. In particular, the presence of state-idependent contextuality with respect to Pauli observables \cite{Mermin_RevModPhys.65.803}  implies that the multiqubit stabilizer subtheory is contextual. Thus the result of \cite{howard_contextuality_2014} cannot be generalized. It is known, though, that for any family of stabilizer operations that is free of state independent contextuality, a classical simulation method based on a matching non-contextual positive Wigner representation exists, but this does not apply to the full stabilizer subtheory \cite{bermejo-vega_contextuality_2017}. It is known that there exist generalized Wigner functions that describe the entire stabilizer subtheory and enable efficient classical simulability \cite{raussendorf_phase-space-simulation_2020}. However, they are over-complete ones, and the associated hidden variable models exhibit certain types of contextuality, e.g., Spekkens generalized contextuality for preparations and transformations~\cite{spekkens2005contextuality}. The precise role between classical simulability, negative quasi-probability, contextuality and possibly other notions of classicality (such as, e.g., psi-epistemicity~\cite{lillystone2019contextual}) remains to be understood. 

Another natural open question is to prove a robust version of the above equivalence. 
Is a hidden variable model that is approximately non-contextual in a suitable sense equivalent to a Wigner function with little negativity, e.g. quantified by the $L^1$-norm of the Wigner function? We point out that, for finite odd-dimensional quantum systems, a robust Hudson's theorem that signals out  approximate pure stabilizer states as those that are approximately Wigner-nonnegative exists~\cite{gross2017schur}.

    \section{Acknowledgments}
    JH was funded by the Foundational Questions Institute (FQXi). JBV acknowledges funding from the European Union’s Horizon 2020 research and innovation programme under the Marie Skłodowska-Curie grant agreement Nº 754446 and UGR Research and Knowledge Transfer Found – Athenea3i. 
    
    Upon completion of this work, we became aware of Ref~\cite{booth2021contextuality}, which makes an essentially equivalent claim also based on measure theoretic techniques. We are grateful to Robert I. Booth, Ulysse Chabaud and Pierre-Emmanuel Emeriau for comments and fruitful cooperation. We thank Jens Eisert for detailed comments on the manuscript.
    
    \bibliography{equivalence_wigner}
	
	\onecolumngrid
	\appendix

\section{Proof of Lemma~\ref{lemma:wignercomposition}}\label{appendix:prooflemma}
We need a slight modification of an argument from~\cite[Eq.~(2.18), (2.19)]{folland_harmonic_1989} based on the following lemma, which is a special case of~\cite[Thm.~2.15]{folland_harmonic_1989}.
Recall that we denote the Weyl quantization scheme by $Q$. 
\begin{lemma}
	Let $A\in \mathrm{Sp}(n,\mathbb{R})$. 
	There is a unitary $U_{A}$ such that
	\begin{equation}
	Q[f\circ A]=U_{A} Q[f]U_{A}^{\dagger},
	\end{equation}
	for all $f\in L^2(\mathbb{R}^{2m})$.
\end{lemma}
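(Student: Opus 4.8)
The plan is to establish the symplectic covariance of the Weyl quantization through the displacement operators together with the Stone--von Neumann uniqueness theorem; this is cleaner than working through an explicit list of generators of $\mathrm{Sp}(n,\RR)$. The starting point is the integral representation of $Q$ in terms of the displacement operators $D(\zeta)$. Writing $\mathcal{F}_\sigma$ for the symplectic Fourier transform, $(\mathcal{F}_\sigma g)(\zeta)=c\int g(v)\,e^{\ii[v,\zeta]}\,\mathrm{d}v$, one has a Weyl integral $Q[f]=c'\int_{\RR^{2m}}(\mathcal{F}_\sigma f)(\zeta)\,D(\zeta)\,\mathrm{d}\zeta$ for suitable normalisation constants $c,c'$. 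This is just the statement that $Q$ inverts the Wigner transform, since $\tr[\rho D(\zeta)]$ is the symplectic Fourier transform of $W_\rho$ and the $D(\zeta)$ resolve the identity. I would first run the argument for Schwartz symbols $f$, where all integrals converge absolutely, and extend to $f\in L^2(\RR^{2m})$ by density at the end.

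Next I would produce the intertwining unitary. Because $A$ is symplectic, $A^T\omega A=\omega$, so the symplectic form is preserved: $[Au,Av]=u^T A^T\omega A\,v=[u,v]$. Hence the family $\zeta\mapsto D(A^{-1}\zeta)$ obeys exactly the same Weyl relations as $\zeta\mapsto D(\zeta)$ and defines a strongly continuous irreducible representation of the canonical commutation relations on the same Hilbert space. By the Stone--von Neumann theorem these two representations are unitarily equivalent, so there exists a unitary $U_A$, unique up to a phase, with
\[ U_A\, D(\zeta)\, U_A^\dagger = D(A^{-1}\zeta)\qquad\text{for all }\zeta. \]
The residual phase is precisely the metaplectic cocycle, and it is immaterial here since it cancels under conjugation.

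Substituting this into the Weyl integral gives
\[ U_A\, Q[f]\, U_A^\dagger = c'\int (\mathcal{F}_\sigma f)(\zeta)\,D(A^{-1}\zeta)\,\mathrm{d}\zeta. \]
The change of variables $\eta=A^{-1}\zeta$ has unit Jacobian because $\det A=1$, turning the right-hand side into $c'\int (\mathcal{F}_\sigma f)(A\eta)\,D(\eta)\,\mathrm{d}\eta$. Finally I would record that the symplectic Fourier transform intertwines symplectic substitutions, $(\mathcal{F}_\sigma f)\circ A=\mathcal{F}_\sigma(f\circ A)$, which follows from the identity $(A^{-1})^T\omega=\omega A$ together with $\det A=1$. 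The right-hand side is then $c'\int \mathcal{F}_\sigma(f\circ A)(\eta)\,D(\eta)\,\mathrm{d}\eta=Q[f\circ A]$, which is the claim.

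The main obstacle is making the existence step fully rigorous: verifying strong continuity and irreducibility of $\zeta\mapsto D(A^{-1}\zeta)$ so that Stone--von Neumann applies, and keeping the convention-dependent bookkeeping consistent --- in particular which of $D(A\zeta)$ or $D(A^{-1}\zeta)$ one declares $U_A$ to implement, since this is exactly what decides between $f\circ A$ and $f\circ A^{-1}$, and where the factors of $2\pi$ sit. If one prefers to avoid Stone--von Neumann, the alternative is to check the relation directly on a generating set of $\mathrm{Sp}(n,\RR)$ --- the maps induced by $\mathrm{GL}(n)$ acting on position space, the multiplication operators $e^{\ii x^T S x/2}$ for symmetric $S$, and the Fourier transform --- for which $U_A$ is explicit; closure under composition is then automatic, because the phases drop out of the conjugation.
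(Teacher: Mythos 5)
Your proof is correct and is essentially the same argument as the one the paper relies on: the paper does not prove this lemma itself but cites it as a special case of Folland's Theorem~2.15, whose proof is precisely your construction --- use Stone--von Neumann to obtain a (metaplectic) unitary intertwining $D(\zeta)$ with $D(A^{-1}\zeta)$, then change variables in the Weyl integral. Your bookkeeping checks out as well: $(A^{-1})^{T}\omega=\omega A$ gives $\mathcal{F}_\sigma(f\circ A)=(\mathcal{F}_\sigma f)\circ A$, $\det A=1$ makes the Jacobian trivial, the phase ambiguity in $U_A$ cancels under conjugation, and the density extension from Schwartz symbols to $L^2$ is legitimate since $f\mapsto Q[f]$ is (up to a constant) an isometry onto Hilbert--Schmidt operators and $f\mapsto f\circ A$ is bounded on $L^2$.
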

\begin{proof}[Proof of Lemma~\ref{lemma:wignercomposition}]
	Consider $\zeta_1,...,\zeta_k$ such that $[\zeta_i,\zeta_j]=0$, then it was shown in Ref.~\cite[Lem.~17]{folland_harmonic_1989} that there is an $A\in\mathrm{Sp}(n,\mathbb{R})$ such that $\zeta_i A=e_i$.
	Applying Lemma~\ref{lemma:wignercomposition} yields
	\begin{align*}
	&Q[f(\zeta\cdot \bullet,...,\zeta_k\cdot \bullet)]\\
	&=Q[(e_1\cdot A \bullet,...,e_k\cdot A \bullet)]\\
	&=Q[f(e_1\cdot \bullet,...,e_k \cdot \bullet)\circ A]\\
	&=U_{A}Q[f(e_1\cdot \bullet,...,e_k \cdot \bullet]U_{A}^{\dagger}.
	\end{align*}
	Eq.~\eqref{eq:multiplicationWigner} is implied for $f(e_1\cdot \bullet,...,e_k \cdot \bullet)$ since the spectral functional calculus and Weyl-quantization yield the same multiplication operator.
	What is more, we can check that
	\begin{equation}
	U_{A}f(e_1\hat{R},...,e_k\hat{R})	U_{A}^{\dagger}=f(\zeta_1\hat{R},...,\zeta_k\hat{R}).
	\end{equation}
	Thus
	\begin{equation}
	f(\zeta_1\hat{R},...,\zeta_k\hat{R})=Q[f(\zeta_1\cdot \bullet,...,\zeta_k \cdot \bullet)]
	\end{equation}
	which implies
	\begin{equation}
	W_{f(\zeta_1\hat{R},...,\zeta_k\hat{R})}\\= W_{Q[f(\zeta_1\cdot \bullet,...,\zeta_k \cdot \bullet)]}=f(W_{\zeta\hat{R}},...,W_{\zeta_k\hat{R}}).
	\end{equation}
\end{proof}

\end{document}